
\documentclass[conference,10pt]{IEEEtran}

\IEEEoverridecommandlockouts                              






\usepackage{framed,color}
\usepackage{verbatim}
\usepackage{graphics} 
\usepackage{epsfig} 
\usepackage{times} 
\usepackage{amsmath} 
\usepackage{amssymb}  
%
%

\newtheorem{thm}{Theorem}[section]
\newtheorem{definition}[thm]{Definition}

\newtheorem{lemma}[thm]{Lemma}

\newtheorem{ex}[thm]{Example}
\newtheorem{interpretation}[thm]{Interpretation}
\newtheorem{rem}[thm]{Remark}
\newenvironment{proof}{\begin{IEEEproof}}{\end{IEEEproof}}

\usepackage{epstopdf}

\newcommand{\R}{\mathbb{R}}
\newcommand{\C}{\mathbb{C}}
\newcommand{\Z}{\mathbb{Z}}

\newcommand{\diag}{\text{diag}}

\newcommand{\vol}{\text{Vol}}


\begin{document}

\title{A Comparison of Skewed and Orthogonal Lattices in Gaussian Wiretap Channels}

\author{
\IEEEauthorblockN{Alex Karrila and Camilla Hollanti, \emph{Member, IEEE}} 
\IEEEauthorblockA{Department of Mathematics and Systems Analysis\\ Aalto University, Finland\\ 
        Emails: firstname.lastname@aalto.fi}}
\maketitle
\thispagestyle{empty}
\pagestyle{empty}

\begin{abstract}

We consider lattice coset-coded transmissions over a wiretap channel with additive white Gaussian noise (AWGN). Examining a function that can be interpreted as either the legitimate receiver's error probability or the eavesdropper's correct decision probability, we rigorously show  that, albeit offering simple bit labeling, orthogonal nested lattices are suboptimal for coset coding in terms of both the legitimate receiver's and the eavesdropper's probabilities.

\end{abstract}

\section{Introduction}

We consider a wiretap set-up, in which a message is transmitted to its legitimate receiver Bob in the presence of Eve the eavesdropper. Eve is assumed to have unlimited computational power, but to experience an additional noise compared to Bob. Lattice coset coding is utilized to maximize Eve's confusion, cf. \cite{Wyner,Oggier-Sole-Belfiore}. Bob's lattice is referred to as the code lattice or dense lattice, and Eve's lattice as the sparse or coarse lattice. The channel is assumed to exhibit additive white Gaussian noise (AWGN) but no fading. 
 The respective channel equations for Bob and Eve are

$$
\mathbf{y}_b=\mathbf{x}+\mathbf{n}_b,\quad \mathbf{y}_e=\mathbf{x}+\mathbf{n}_e,
$$
where  $\mathbf{y}$ is the received vector, $\mathbf{x}$ the transmitted coset-coded vector, and $\mathbf{n}$ is AWGN with respective variances $\sigma_b^2 < \sigma_e^2$.

In finding optimal lattice wiretap codes, there are three main objectives:
\begin{itemize}
\item[i)] Maximizing the data rate $R$, which is determined by the size of the codebook $\mathcal{C}$ and the decoding delay $n$ as 
$$R=\frac{\log_2|\mathcal{C}|}n$$
bits per channel use (bpcu).
\item[ii)] Minimizing the legitimate receiver's decoding error probability.
\item[iii)] Minimizing the eavesdropper's probability of correct decision.
\end{itemize}

Considering only the first two problems, the largest codebooks for a fixed transmission power and an upper bound for the receiver's error probability are the solutions to the widely investigated sphere-packing problem. This results in lattices that are typically nonorthogonal (see, e.g., \cite{Conway-Sloane}). Orthogonal lattices have still traditionally been preferred due to an easy-to-implement bit-labeling algorithm, namely the Gray-mapping. Due to this mapping, the encoding and decoding procedures are more straightforward for orthogonal lattices than for nonorthogonal, i.e., skewed lattices. Nevertheless, computationally efficient closest-point algorithms such as the sphere decoder also exist for nonorthogonal lattices (for an explicit construction, see \cite{Viterbo}, Sec. 4). In \cite{CamiKalle,KumarCaire}, it was also demonstrated how skewed lattices can be efficiently encoded and decoded by using a modified power-controlled sphere decoder or sphere decoding adjoined with minimum-mean-square-error generalized-decision-feedback-equalization (MMSE-GDFE), both resulting in optimal (maximum-likelihood) performance. Hence, skewed lattices should not be excluded when searching for optimal lattices, in particular in the light of the present paper showing that they are not only better in terms of Bob's performance, but also in terms of confusing the eavesdropper.

Similarly to the sphere-packing problem in Bob's case, we now include the third objective in our consideration. Our approach is to fix the data rate and the transmission power and then compare skewed and orthogonal lattices from the point of view of the latter two objectives in an AWGN channel. We study an expression that has two alternative interpretations as either the receiver's error probability (REP) for any lattice code in an AWGN channel or the eavesdropper's correct decision probability (ECDP) for a lattice coset code in an AWGN channel. We prove the following results (notation will be defined in the subsequent section).
\begin{itemize}
\item[i)] Skewing Bob's orthogonal code lattice $\Lambda_b$ will decrease the REP of any code.
\item[ii)] Skewing Eve's orthogonal sparse lattice $\Lambda_e$ will decrease the ECDP of any lattice coset code.
\item[iii)] Combining the previous two results, the common set-up of the dense lattice $\Lambda_b$ being orthogonal and the commonly used choice of an orthogonal sublattice $\Lambda_e = 2^k \Lambda_b$ are suboptimal in terms of both the ECDP and the REP. According to whether  Gray-labeling is insisted or not, this common set-up can be improved by either choosing a skewed sublattice of the same orthogonal dense lattice $\Lambda_b$, leaving Bob's lattice orthogonal and the REP suboptimal, or skewing both lattices. 
\end{itemize}
These results suggest that skewed lattices deserve more attention in the study of the AWGN wiretap channels even though their encoding and decoding are admittedly somewhat more complicated than that of orthogonal lattices. It is also worthwhile to keep in mind that in any practical system, an outer error correcting code, e.g., a low-density parity-check (LDPC) code, is used in addition to the inner lattice code. The true decoding bottle-neck in this case is the outer code requiring soft input, not the lattice code. 

\section{Preliminaries}

In this section, we present some necessary definitions and their information-theoretic interpretations.

\begin{definition}
A \textit{lattice} is a discrete additive subgroup of $\R^n$.
\end{definition}

Any point in a lattice $\Lambda \subset \R^n$ can be expressed in terms of a \textit{generator matrix} $M \in \R^{n \times m}$ as follows
\begin{equation*}
\Lambda = \{ \mathbf{x} \in \R^n \vert \mathbf{x} = M \omega, \omega \in \Z^m \}.
\end{equation*}
We assume that the columns of $M$ are linearly independent over $\Z$ and hence, the \textit{lattice coordinates} $\omega$ of a lattice point are unique. If $m=n$, the lattice is of \textit{full rank}. A \textit{sublattice} of a lattice of dimension $m$ in $\R^n$ is an additive subgroup; it has a generator matrix $MZ$, where $Z \in \Z^{m \times k}$. Here $k$ is the dimension of the sublattice and for a square matrix $Z$, $$\vert \Lambda_b / \Lambda_e \vert = \vert \det Z \vert.$$ The \textit{volume} $\vol(\Lambda)$ of the lattice $\Lambda$ is the volume of the fundamental parallellotope spanned by the column vectors of $M$, given by $$\vol(\Lambda)=\vert \det M \vert$$ for full-rank lattices.

\begin{rem}
Differing from some  information theory references, here vectors are identified with \textit{column} matrices and the lattice generator vectors with the \textit{columns} of the generator matrix $M$.
\end{rem}

\begin{definition}
The \textit{dual lattice} $\Lambda^\star$ of a full-rank lattice $\Lambda$ generated by $M$ is the one generated by $$ M^{-T}:=(M^{-1})^T = (M^T)^{-1} .$$
\end{definition}


\begin{thm}[The Poisson formula for lattices] Let $\Lambda$ be a full-rank lattice with generator $M$  and let $f: \R^n \to \C$ be a continuous function with $\int_{\mathbf{x} \in \R^n} \vert f(\mathbf{x}) \vert d^n x < \infty$ and $\sum_{\mathbf{t} \in \Lambda^\star} \vert \hat{f}(\mathbf{t}) \vert < \infty$ such that the partial sums of $\sum_{\mathbf{t} \in \Lambda} \vert f(\mathbf{t} + \mathbf{u}) \vert$ converge uniformly whenever $\mathbf{u}$ is restricted onto a compact set. Then,
\begin{equation*}
\sum_{\mathbf{t} \in \Lambda} f(\mathbf{t}) = \vert \det M \vert ^{-1} \sum_{\mathbf{t}\in \Lambda^\star} \hat{f}(\mathbf{t})
\end{equation*}
where the Fourier transform is defined as
\begin{equation*}
\hat{f}(\mathbf{t}) = \int_{\mathbf{y} \in \R^n} e^{-i 2 \pi \mathbf{y} \cdot \mathbf{t}} f(\mathbf{y}) dy.
\end{equation*}
\end{thm}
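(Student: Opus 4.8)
The plan is to form the $\Lambda$-periodization of $f$, expand it in a Fourier series over the torus $\R^n/\Lambda$, and then evaluate at the origin. First I would define $F(\mathbf{u}) = \sum_{\mathbf{v} \in \Lambda} f(\mathbf{v} + \mathbf{u})$, which is $\Lambda$-periodic by construction and, by the assumed uniform convergence of the partial sums on compact sets, continuous. The desired identity is then simply $F(\mathbf{0})$ written in two ways, so the whole argument reduces to identifying the Fourier coefficients of $F$.

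Second, I would use that the characters of the compact group $\R^n/\Lambda$ are exactly $\mathbf{u} \mapsto e^{i 2\pi \mathbf{u} \cdot \mathbf{t}}$ for $\mathbf{t} \in \Lambda^\star$: by definition $\mathbf{t} \in \Lambda^\star$ means $\mathbf{t} \cdot \mathbf{v} \in \Z$ for every $\mathbf{v} \in \Lambda$, which is precisely the condition making the exponential $\Lambda$-periodic. This is where the dual lattice enters the formula. The Fourier coefficient of $F$ at $\mathbf{t} \in \Lambda^\star$ is
$$c_{\mathbf{t}} = \frac{1}{\vol(\Lambda)} \int_{\R^n/\Lambda} F(\mathbf{u})\, e^{-i 2\pi \mathbf{u} \cdot \mathbf{t}}\, d^n u,$$
and substituting the definition of $F$ and unfolding the fundamental-domain integral together with the sum over $\Lambda$ into a single integral over all of $\R^n$ gives $c_{\mathbf{t}} = |\det M|^{-1}\, \hat{f}(\mathbf{t})$, using $\vol(\Lambda) = |\det M|$.

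Third, the hypothesis $\sum_{\mathbf{t} \in \Lambda^\star} |\hat{f}(\mathbf{t})| < \infty$ forces the Fourier series $\sum_{\mathbf{t} \in \Lambda^\star} c_{\mathbf{t}}\, e^{i 2\pi \mathbf{u} \cdot \mathbf{t}}$ to converge absolutely and uniformly to a continuous function; since that function has the same Fourier coefficients as the continuous function $F$, the two agree everywhere. Evaluating the resulting identity at $\mathbf{u} = \mathbf{0}$ and multiplying out yields $\sum_{\mathbf{v} \in \Lambda} f(\mathbf{v}) = |\det M|^{-1} \sum_{\mathbf{t} \in \Lambda^\star} \hat{f}(\mathbf{t})$, as claimed.

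I expect the main obstacle to be analytic bookkeeping rather than a single deep idea. Two points need care: (a) interchanging the lattice sum defining $F$ with the integral defining $c_{\mathbf{t}}$, which is justified by the absolute/uniform convergence hypothesis together with Fubini on the fundamental domain; and (b) the pointwise equality of $F$ with its Fourier series, for which $L^2$ convergence alone would not suffice, and it is exactly the absolute summability of $\{\hat{f}(\mathbf{t})\}_{\mathbf{t} \in \Lambda^\star}$ that upgrades convergence to uniform, after which uniqueness of Fourier coefficients for continuous functions closes the gap. An equivalent route substitutes $\mathbf{x} = M\mathbf{y}$ to reduce the statement to the classical Poisson formula on $\Z^n$, where the Jacobian produces the factor $|\det M|^{-1}$ and the transpose inverse $M^{-T}$ maps $\Z^n$ onto $\Lambda^\star$; the same two convergence issues reappear there, repackaged into the scalar lattice case.
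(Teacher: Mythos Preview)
Your proposal is correct and follows the standard periodize--expand--evaluate argument that any textbook proof of lattice Poisson summation uses; the analytic bookkeeping you flag (Fubini on the fundamental domain, and absolute summability of $\hat f$ on $\Lambda^\star$ to pass from $L^2$ to pointwise equality at $\mathbf{0}$) is exactly what needs checking, and the hypotheses are tailored to make those steps go through.

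The paper itself does not give a proof: it simply cites Ebeling's book and remarks that the continuity hypothesis on $f$, which Ebeling omits, is in fact essential. So you are supplying more than the paper does. Your sketch is precisely the argument in Ebeling (periodization $F(\mathbf{u})=\sum_{\mathbf{v}\in\Lambda}f(\mathbf{v}+\mathbf{u})$, Fourier expansion over $\R^n/\Lambda$, unfolding, evaluation at $\mathbf{0}$), and your comment that uniqueness of Fourier coefficients for \emph{continuous} functions is what closes the gap is exactly the point the paper is flagging when it says continuity is essential. The alternative change-of-variables route you mention (reducing to $\Z^n$ via $\mathbf{x}=M\mathbf{y}$) is equivalent and equally valid.
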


\begin{proof}
The proof is given in \cite{Ebeling}. We point out that the condition on the continuity of $f$ is essential for the proof and is missing in the book.
\end{proof}

The function that we will optimize is the following.

\begin{definition}
The \textit{psi function} $\psi_\Lambda (x)$ of a lattice $\Lambda$ at a point $x \in \R_+$ is given by
\begin{equation*}
\psi_\Lambda(x) =  \sum_{\mathbf{t} \in \Lambda} e^{-x \Vert \mathbf{t} \Vert^2}.
\end{equation*}
\end{definition}
This is a variant of lattice theta series restricted on the imaginary axis, $\psi_\Lambda (x) = \Theta_\Lambda (ix/\pi)$. The convergence properties of the psi series follow from those of the theta series.

\begin{interpretation}
In \cite{Oggier-Sole-Belfiore}, an upper approximation for the ECDP $P_{c,e}$ for a lattice coset code is derived as
\begin{equation}
\label{Pce}
P_{c,e} \le \frac{\vol (\Lambda_b)}{(\sqrt{2 \pi} \sigma_e)^n} \psi_{\Lambda_e} \left( \frac{1}{2 \sigma_e^2} \right).
\end{equation}
Here $\Lambda_b$ is the dense and $\Lambda_e$ the sparse lattice, inteded for the receiver and the eavesdropper, respectively. The lattices are assumed to be of full rank and the eavesdropper's noise is assumed to be AWGN with variance $\sigma_e^2$. The inequality \eqref{Pce} is tight for large $\sigma_e$. For small $\sigma_e$, the upper bound is larger than $1$ and hence useless.

On the other hand, using the union bound technique as is done in \cite[Appendix II]{Boutros-Viterbo-Rastello-Belfiore} for Rayleigh-fading channels and setting the Rayleigh fading coefficients equal to one, the REP can be approximated from above as:
\begin{equation}
\label{Peb}
P_{e, b} \le 1/2 \left( \psi_{\Lambda_b}\left( \frac{1}{8 \sigma_b^2} \right) - 1 \right).
\end{equation}
This formula is valid for any lattice  code $\Lambda_b$ (not just a coset code) in an AWGN channel and the approximation is good for small receiver's noise variances $\sigma_b^2$.

Based on these two formulae and the fact that the variances $\sigma_e^2$ and $\sigma_b^2$ vary with the random channels, our subsequent aim will be to provide inequalities of the form $\psi_{\Lambda_1}(x) < \psi_{\Lambda_2}(x)$ for all $x \in \R_+$. When comparing different lattices sharing the same dimension, their volumes are first normalized to one. This ensures that for a relatively large fixed transmission power, the finite codebooks carved from the infinite lattices will be approximately equally large, and hence we can fairly compare the lattice codes without considering the actual data rates, as these will coincide.
\end{interpretation}

\begin{rem}
Due to the obvious connection between the formulae \eqref{Peb} and \eqref{Pce} for the REP and ECDP, respectively, one would intuitively guess that a solution for the sphere-packing problem also yields an optimal ECDP. This, however, does not seem to work on the level of mathematical proofs; Eq. \eqref{Peb} is obtained by the union bound technique, whereas in the sphere-packing problem, the upper bound for REP is based on integrating a Gaussian function over a ball, yielding a much tighter bound for large receiver's noise variances $\sigma_b$ or, equivalently, for small arguments of $\psi$. To minimize the ECDP, we want to minimize $\psi$ for small arguments. Hence, even if the sphere-packing probability bound is small, it does not provide us with immediate information as to how small the $\psi$ function is for small arguments, i.e., how small the ECDP is.
\end{rem}

%
%

\section{Skewing an orthogonal lattice}

In this section, we show that skewing a lattice will always improve a code both in terms of Eve's and Bob's probabilities.

\begin{lemma}
\label{translation lemma}
For any full-rank lattice $\Lambda$,
\begin{equation}
\psi_\Lambda (x) > \sum_{\mathbf{t} \in \Lambda} e^{-x \Vert \mathbf{t} + \mathbf{u} \Vert^2}
\end{equation}
for any $\mathbf{u} \not \in \Lambda$.
\end{lemma}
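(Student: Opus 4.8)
The plan is to apply the Poisson summation formula (the theorem stated above) to the Gaussian $f(\mathbf{y}) = e^{-x \Vert \mathbf{y} \Vert^2}$ with $x$ fixed. Its hypotheses are routine to verify here: $f$ is continuous and integrable, and both $\hat{f}$ and the translated sums decay like Gaussians, giving the required absolute and uniform convergence. The feature that makes the whole argument work is that the Fourier transform of a Gaussian is again a \emph{positive} Gaussian,
$$\hat{f}(\mathbf{t}) = \left(\frac{\pi}{x}\right)^{n/2} e^{-\pi^2 \Vert \mathbf{t} \Vert^2 / x} > 0 \quad \text{for every } \mathbf{t} \in \R^n.$$

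First I would record a shifted version of the formula. Setting $g(\mathbf{y}) = f(\mathbf{y} + \mathbf{u})$, a change of variables in the defining Fourier integral gives $\hat{g}(\mathbf{t}) = e^{2\pi i\, \mathbf{u} \cdot \mathbf{t}}\, \hat{f}(\mathbf{t})$, so Poisson summation applied to $g$ yields
$$\sum_{\mathbf{t} \in \Lambda} e^{-x \Vert \mathbf{t} + \mathbf{u} \Vert^2} = \vert \det M \vert^{-1} \sum_{\mathbf{t} \in \Lambda^\star} \hat{f}(\mathbf{t})\, e^{2\pi i\, \mathbf{u} \cdot \mathbf{t}}.$$
Subtracting this from the plain Poisson expansion of $\psi_\Lambda(x)$, and using that the left-hand side is manifestly real so that the right-hand side equals its own real part, I would obtain
$$\psi_\Lambda(x) - \sum_{\mathbf{t} \in \Lambda} e^{-x \Vert \mathbf{t} + \mathbf{u} \Vert^2} = \vert \det M \vert^{-1} \sum_{\mathbf{t} \in \Lambda^\star} \hat{f}(\mathbf{t}) \bigl(1 - \cos(2\pi\, \mathbf{u} \cdot \mathbf{t})\bigr).$$

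Every summand here is nonnegative, since $\hat{f}(\mathbf{t}) > 0$ and $1 - \cos(\cdot) \ge 0$, so the inequality is reduced to exhibiting a single strictly positive term, that is, one $\mathbf{t} \in \Lambda^\star$ with $\mathbf{u} \cdot \mathbf{t} \notin \Z$. This is precisely where the hypothesis $\mathbf{u} \notin \Lambda$ is used: if $\mathbf{u} \cdot \mathbf{t} \in \Z$ were to hold for every $\mathbf{t} \in \Lambda^\star$, then $\mathbf{u}$ would lie in the dual of $\Lambda^\star$, which for a full-rank lattice is $\Lambda$ itself (since $\Lambda^\star$ is generated by $M^{-T}$, its dual is generated by $M$), contradicting $\mathbf{u} \notin \Lambda$. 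Hence at least one term is strictly positive and the claimed strict inequality follows.

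I expect the only genuinely delicate point to be this final duality step: invoking $(\Lambda^\star)^\star = \Lambda$ together with the characterization of $\Lambda$ as the set of vectors pairing integrally with all of $\Lambda^\star$. Everything else is the standard Gaussian--Poisson computation, combined with the sign bookkeeping made possible by the strict positivity of $\hat{f}$.
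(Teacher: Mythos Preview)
Your proposal is correct and follows essentially the same route as the paper: apply Poisson summation to the shifted Gaussian, use positivity of the Gaussian's Fourier transform to reduce to a cosine sum, and exhibit one $\mathbf{t}\in\Lambda^\star$ with $\mathbf{u}\cdot\mathbf{t}\notin\Z$. The only cosmetic differences are that the paper applies Poisson a second time to recover $\psi_\Lambda(x)$ rather than subtracting two Poisson expansions, and it finds the witness $\mathbf{t}$ by an explicit coordinate computation ($\mathbf{u}=M\omega_1$, $\mathbf{t}=M^{-T}\mathbf{e}_j$) rather than invoking $(\Lambda^\star)^\star=\Lambda$.
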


\begin{proof}
Denote summands of the respective sides as $g(\mathbf{t}) = e^{-x \Vert \mathbf{t} \Vert^2}$ and $f(\mathbf{t}) = e^{-x \Vert \mathbf{t} + \mathbf{u} \Vert^2}$, so $f(\mathbf{t}) = g(\mathbf{t} + \mathbf{u})$. Then, by the elementary properties of Fourier transform, we have $\hat{f}(\mathbf{t}) = \hat{g}(\mathbf{t}) e^{-i 2 \pi \mathbf{t} \cdot \mathbf{u}}$. Hence, using the Poisson formula,
\begin{eqnarray*}
&& \sum_{\mathbf{t} \in \Lambda} e^{-x \Vert \mathbf{t} + \mathbf{u} \Vert^2}\\
 &=& \sum_{\mathbf{t} \in \Lambda} f(\mathbf{t}) \\
&=& \vert \det M \vert^{-1} \sum_{\mathbf{t} \in \Lambda^\star} \hat{g}(\mathbf{t}) e^{-i 2 \pi \mathbf{t} \cdot \mathbf{u}},
\end{eqnarray*}
where $M$ is the generator matrix of $\Lambda$. In continuation, we will use the knowledge that the Fourier transform of the gaussian function $g$ is another gaussian, hence a real, positive and even function. (The explicit form of $\hat{g}$ could be calculated but it is not necessary.) First, since $\hat{g}$ is even, the imaginary parts $-i \hat{g}(\pm \mathbf{t}) \sin (\pm 2 \pi \mathbf{t} \cdot \mathbf{u})$ of the summand for lattice $\Lambda^\star$ points $\pm \mathbf{t}$ cancel out, yielding
\begin{eqnarray*}
&& \sum_{\mathbf{t} \in \Lambda} e^{-x \Vert \mathbf{t} + \mathbf{u} \Vert^2}\\
&=& \vert \det M \vert^{-1} \sum_{\mathbf{t} \in \Lambda^\star} \hat{g}(\mathbf{t}) \cos(2 \pi \mathbf{t} \cdot \mathbf{u}). \end{eqnarray*}

Next, we need the positivity of $\hat{g}$ to be able to approximate the cosine by $1$. First, note that we assumed $\mathbf{u} \not \in \Lambda$, equivalently, $\mathbf{u} = M \omega_1$ with some component of $\omega_1$, say the $j^{th}$ one $\omega_{1,j}$, not integer. Also note that $\Lambda^\star \ni \mathbf{t} = M^{-T}\omega_2,$ where $ \omega_2 \in \Z^n$. Hence, $\mathbf{t} \cdot \mathbf{u} = \omega_2^T M^{-1} M \omega_1 = \omega_2^T \omega_1$. Now, choosing the lattice point $\mathbf{t}$ such that $\omega_2=\mathbf{e}_j$, we immediately see that $\mathbf{t} \cdot \mathbf{u} = \omega_{1,j} \not \in \Z$ and $\cos(2 \pi \mathbf{t} \cdot \mathbf{u}) < 1$. Hence, replacing $\cos(2 \pi \mathbf{t} \cdot \mathbf{u})$ by $1$ in the preceding step, we get a strict inequality
\begin{eqnarray}
&& \sum_{\mathbf{t} \in \Lambda} e^{-x \Vert \mathbf{t} + \mathbf{u} \Vert^2}\\
& < & \vert \det M \vert^{-1} \sum_{\mathbf{t} \in \Lambda^\star} \hat{g}(\mathbf{t}) \label{poisson} \\
 &=& \vert \det M \vert ^{-1} \vert \det M^{-T} \vert^{-1} \sum_{\mathbf{t} \in \Lambda} \hat{\hat{g}}(\mathbf{t}) \\
&=& \sum_{\mathbf{t} \in \Lambda} \hat{\hat{g}}(\mathbf{t}),
\end{eqnarray}
where we have again applied the the Poisson formula to \eqref{poisson}. Finally, the double Fourier transform is in general a reflection operator, so $\hat{\hat{g}}(\mathbf{t}) = g(-\mathbf{t})$, and using the fact that $g(-\mathbf{t}) = g(\mathbf{t})$ we obtain the result,
\begin{eqnarray*}
&& \sum_{\mathbf{t} \in \Lambda} e^{-x \Vert \mathbf{t} + \mathbf{u} \Vert^2}\\
& < & \sum_{\mathbf{t} \in \Lambda} g(\mathbf{t}) \\
& = & \psi_\Lambda (x).
\end{eqnarray*}
\end{proof}

\begin{definition}
\label{def:skew}
Let $\Lambda_o$ be a full-rank orthogonal lattice in $\R^n$ with generator vectors $a_1 \mathbf{e}_1,...,a_n\mathbf{e}_n$, $a_i > 0$ for $1 \le i \le n$. We call a lattice $\Lambda_s \neq \Lambda_o$ a \textit{skewing} of $\Lambda_o$, if it has a generator matrix that is an upper triangular matrix with the diagonal elements $a_1,...,a_n$.
\end{definition}

This definition has a simple geometric interpretation, depicted in Fig. \ref{skewfig}.

\begin{figure}[h!]
  
  \centering
    \includegraphics[width=0.5\textwidth]{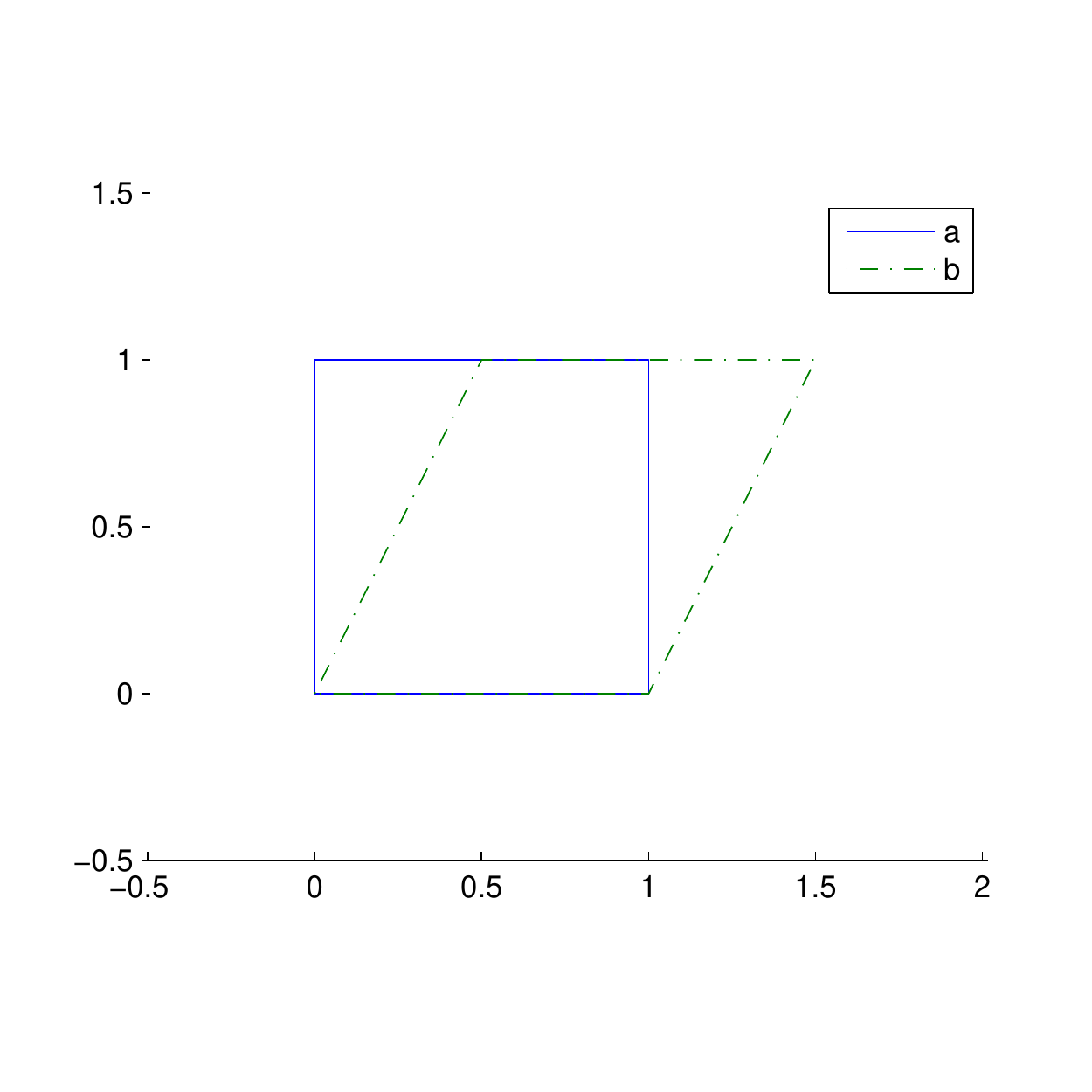}
    \caption{The fundamental parellellotopes of (a) a square lattice (b) its skewing.}
    \label{skewfig}
\end{figure}

We point out that skewing can be interpreted as a matrix operation. If $M_o$ and $M_s$ are the generator matrices of $\Lambda_o$ and $\Lambda_s$, respectively, then the non-singular skewing matrix $S$ can be solved from the matrix equation
\begin{equation}
M_s = S M_o.
\end{equation}
This equation is non-singular, since $$\det M_o = \prod_{i=1}^n a_i \neq 0$$ and $$ \det M_s = \prod_{i=1}^n a_i = \det M_o$$ by the determinant rule of upper triangular matrices. This also implies that $\det S = 1$.

Now we are ready to state the main theorem. After this, we will provide an illustrative interpretation of the theorem and prove it.

\begin{thm}
\label{skewing theorem}
For a skewing $\Lambda_s$ of a full-rank orthogonal lattice $\Lambda_o$,
\begin{equation*}
\psi_{\Lambda_s} (x) < \psi_{\Lambda_o} (x)
\end{equation*}
for all $x > 0$.
\end{thm}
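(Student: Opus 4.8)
The plan is to argue by induction on the dimension $n$, peeling off the last coordinate so as to reduce the skewing of an $n$-dimensional lattice to that of an $(n-1)$-dimensional one, with Lemma~\ref{translation lemma} supplying the strict inequality at each level. The geometric idea is that, because the generator matrix $M_s$ is upper triangular, only its last column has a nonzero $n$-th entry, namely $a_n$. Hence every point of $\Lambda_s$ lies in a horizontal slice $x_n = k a_n$, $k \in \Z$, and the slice indexed by $k$ is the coset $\Lambda_s^{(n-1)} + k\mathbf{c}_n$, where $\Lambda_s^{(n-1)}$ is the $(n-1)$-dimensional lattice generated by the first $n-1$ columns of $M_s$ (itself a skewing of the orthogonal $\Lambda_o^{(n-1)}$ with diagonal $a_1,\dots,a_{n-1}$) and $\mathbf{c}_n$ is the last column. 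Writing $\mathbf{c}_n = (\mathbf{p}_n, a_n)$ with $\mathbf{p}_n \in \R^{n-1}$ its projection onto the first $n-1$ coordinates, the slices of $\Lambda_o$ give the same picture with $\mathbf{p}_n = 0$.

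First I would split the psi series along these slices. Since $\mathbf{t}' \in \Lambda_s^{(n-1)}$ has vanishing last coordinate, $\|\mathbf{t}' + k\mathbf{c}_n\|^2 = \|\mathbf{t}' + k\mathbf{p}_n\|^2 + k^2 a_n^2$, so
\[
\psi_{\Lambda_s}(x) = \sum_{k \in \Z} e^{-x k^2 a_n^2} \sum_{\mathbf{t}' \in \Lambda_s^{(n-1)}} e^{-x\|\mathbf{t}' + k\mathbf{p}_n\|^2},
\]
and likewise $\psi_{\Lambda_o}(x) = \big(\sum_{k} e^{-x k^2 a_n^2}\big)\,\psi_{\Lambda_o^{(n-1)}}(x)$. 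Applying Lemma~\ref{translation lemma} in dimension $n-1$ to the inner sum bounds it by $\psi_{\Lambda_s^{(n-1)}}(x)$, with equality exactly when $k\mathbf{p}_n \in \Lambda_s^{(n-1)}$; multiplying by the positive factor $e^{-xk^2a_n^2}$ and summing gives $\psi_{\Lambda_s}(x) \le \big(\sum_k e^{-xk^2a_n^2}\big)\,\psi_{\Lambda_s^{(n-1)}}(x)$. The induction hypothesis then yields $\psi_{\Lambda_s^{(n-1)}}(x) \le \psi_{\Lambda_o^{(n-1)}}(x)$, and chaining the two estimates produces $\psi_{\Lambda_s}(x) \le \psi_{\Lambda_o}(x)$.

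The step I expect to be the main obstacle is the strictness bookkeeping: I must ensure that at least one of the two inequalities in this chain is strict whenever $\Lambda_s \neq \Lambda_o$. The plan is a short case analysis. If the induced skewing is nontrivial, i.e. $\Lambda_s^{(n-1)} \neq \Lambda_o^{(n-1)}$, the second inequality is strict by the inductive hypothesis. Otherwise $\Lambda_s^{(n-1)} = \Lambda_o^{(n-1)}$, and if moreover $k\mathbf{p}_n \in \Lambda_s^{(n-1)}$ for every $k$ (so that the first inequality is an equality), then in particular $\mathbf{p}_n \in \Lambda_o^{(n-1)}$; subtracting the corresponding integer combination of the first $n-1$ generators from $\mathbf{c}_n$ replaces the last column by $a_n \mathbf{e}_n$, forcing $\Lambda_s = \Lambda_o$, contrary to assumption. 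Hence some $k$ violates $k\mathbf{p}_n \in \Lambda_s^{(n-1)}$ and the first inequality is strict. The base case $n=1$ is vacuous, since a $1\times 1$ upper-triangular matrix with diagonal $a_1$ admits no skewing distinct from $\Lambda_o$, and the conclusion holds for every $x > 0$ because each factor $\sum_k e^{-xk^2a_n^2}$ is finite and positive.
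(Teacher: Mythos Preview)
Your proof is correct and follows essentially the same route as the paper's: peel off the last coordinate using the upper-triangular structure, apply Lemma~\ref{translation lemma} to the $(n-1)$-dimensional inner sum, and iterate. The paper unwinds the recursion explicitly as a product $\psi_{\Lambda_{s,n}}(x)\le\prod_i\big(\sum_{\omega_i}e^{-x\omega_i^2 a_i^2}\big)$ rather than framing it as induction on $n$, but the computation and the strictness bookkeeping (equality at step $k$ iff the $k$th column can be replaced by $a_k\mathbf e_k$) are the same.
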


\begin{interpretation}
Skewings provide several easy ways to improve lattice coset codes. We point out that since skewing keeps the lattice volume constant ($\det S = 1$ in matrix representation), it will not affect the size of a spherical codebook. Hence, a lattice comparison between skewings only requires considering the ECDP and the REP. With ths knowledge, the theorem has the following immediate implications.
\begin{itemize}
\item[i)] Comparing a dense lattice $\Lambda_{b, o}$ and its skewings $\Lambda_{b, s}$, Theorem \ref{skewing theorem} applied to Eq. \eqref{Peb} shows that the REP is always smaller for the skewings $\Lambda_{b, s}$. This holds for all codes, not just coset codes.

\item[ii)] Consider a coset code arising from a fixed nonorthogonal lattice $\Lambda_b$. Then, to minimize the ECDP \eqref{Pce}, it seems that $\Lambda_e$ should not be chosen orthogonal (if orthogonal sublattices exist). Note that then no skewing of the orthogonal $\Lambda_e$ is necessarily a sublattice of $\Lambda_b$, so this is just heuristics.

\item[iii)] Consider a typical set-up of $\Lambda_{b, o}$ generated by $M = \diag(a_1, ... a_n)$ being orthogonal and $\Lambda_{e, o} = 2^k \Lambda_{b, o}$. In this case both the REP and the OCDP are suboptimal. There are two remedies:
\begin{itemize}
\item First, we can skew both $\Lambda_{e, o}$ and $\Lambda_{b, o}$. Skewing by $S$ so that the skewed lattices $\Lambda_{b,s}$ and $\Lambda_{e,s}$ are generated by $SM$ and $2^k SM$, respectively, will yield a nonorthogonal lattices but preserve the volumes: $\vol(\Lambda_{b, s}) = \vol (\Lambda_{b, o})$ (since $\det S = 1$). Hence, applying this and Theorem \ref{skewing theorem} in Eqs. \eqref{Peb} and \eqref{Pce}, we see that skewing will decrease both the REP and the ECDP. However, the skewed lattice will not allow for a simple Gray mapping, or in other words, the Gray mapping is not guaranteed to give an optimal bit-labeling.
\item
Second, we can only opt for skewing the sublattice $\Lambda_e$, while leaving $\Lambda_b$ orthogonal. This means that the REP will remain suboptimal, but the lattice will allow for Gray labeling and maintains simpler encoding and decoding for Bob along the lines discussed in the introduction. Moreover, the ECDP is decreased. The idea is that if $\Lambda_{b, o}$ is orthogonal and generated by $\diag(a_1, ... a_n)$, and $\Lambda_{e, o} = 2^k \Lambda_b$, then any  sublattice $\Lambda_{e,s}$ generated by $MZ$, where $Z$ is an upper triangular integer matrix with diagonal etries $2^k$, is easily proven to be a skewing of $\Lambda_{e,o}$ (or equal to $\Lambda_{e,o}$). Then, applying Theorem \ref{skewing theorem} to Eq. \eqref{Pce}, we see that $\Lambda_{e, s}$ will yield a lower ECDP.
\end{itemize}
\end{itemize}

\end{interpretation}

\begin{proof}[Proof of Theorem \ref{skewing theorem}]

Let us use the notation of Definition \ref{def:skew}. Furthermore, denote by $\Lambda_{s,k}$ the embedding into $\R^k$, $k \le n$, of $\Lambda_s \cap \R^k \times 0^{n-k}$. Equivalently, $\Lambda_{s,k}$ is the lattice in $\R^k$ generated by the $k$ first columns of the generator $M_s$ of $\Lambda_s$. Continuing to ease the notation, denote the projection of the $k^{th}$ column of $M_s$ onto $\R^{k-1}$ by $\mathbf{m}_k$, so the $k^{th}$ column is $a_k \mathbf{e}_k + \mathbf{m}_k$.

Now, it is apparent from the definition of $\Lambda_{s, k}$ that
\begin{equation}
\label{psin}
\psi_{\Lambda_{s}} (x) = \psi_{\Lambda_{s, n}} (x),
\end{equation}
and that
\begin{equation}
\label{psi1}
\psi_{\Lambda_{s, 1}} (x)= \sum_{\omega_1 \in \Z}  e^{-x \omega_1^2 a_1^2}.
\end{equation}
On the other hand, using Lemma \ref{translation lemma} for the lattice $\Lambda_{s, k-1}$, $2 \le k \le n$, which is a full-rank lattice of $\R^{k-1}$, we obtain

\begin{eqnarray}
\nonumber
&& \psi_{\Lambda_{s, k}} (x) \\
\nonumber
&=& \sum_{\mathbf{t} \in \Lambda_{s, k}} e^{-x \Vert \mathbf{t} \Vert^2} \\
\nonumber
&=& \sum_{\omega_k \in \Z} \quad \sum_{\mathbf{t}^{(k-1)} \in \Lambda_{s, k-1}} e^{-x \omega_k^2 a_k^2 -x \Vert \mathbf{t}^{(k-1)} + \omega_k \mathbf{m}_k \Vert^2} \\
\nonumber
&=& \sum_{\omega_k \in \Z} e^{-x \omega_k^2 a_k^2} \sum_{\mathbf{t}^{(k-1)} \in \Lambda_{s, k-1}} e^{-x \Vert \mathbf{t}^{(k-1)} + \omega_k \mathbf{m}_k \Vert^2} \\
\nonumber
& \le & \sum_{\omega_k \in \Z}  e^{-x \omega_k^2 a_k^2} \sum_{\mathbf{t}^{(k-1)} \in \Lambda_{s, k-1}} e^{-x \Vert \mathbf{t}^{(k-1)} \Vert^2}\\
\label{induction}
&=& \left( \sum_{\omega_k \in \Z}  e^{-x \omega_k^2 a_k^2} \right) \psi_{\Lambda_{s, k -1}} (x)
\end{eqnarray}
and, as stated in Lemma \ref{translation lemma}, the equality holds if and only if $\omega_k \mathbf{m}_k \in \Lambda_{s, k-1}$ for all $\omega_k \in \Z$, equivalently, $\mathbf{m}_k \in \Lambda_{s, k-1}$. This is furthermore equivalent to that the $k^{th}$ column $a_k \mathbf{e}_k + \mathbf{m}_k$ of $M_s$ can be replaced by $a_k \mathbf{e}_k$ without changing the lattice $\Lambda_s$.

Next, starting from Eq. \eqref{psin}, using the identity \eqref{induction} inductively, and finally using Eq. \eqref{psi1}, we obtain
\begin{eqnarray*}
\psi_{\Lambda_{s, n}} (x) 
& \le & \prod_{i=1}^n \left( \sum_{\omega_i \in \Z}  e^{-x \omega_i^2 a_i^2} \right) \\
&=& \psi_{\Lambda_{s, o}} (x).
\end{eqnarray*}
The equality holds if and only if it has  been possible to modify,  for all $k$,  the $k^{th}$ column of $M_s$ into $a_k \mathbf{e_k}$ without changing the lattice generated by $M_s$. But this is equivalent to $M_s$ and $(a_1 \mathbf{e}_1,...,a_n \mathbf{e}_n) = M_o$ generating the same orthogonal lattice $\Lambda_o$. This is impossible by the definition of a skewing. Hence, for any skewing $\Lambda_s$ of $\Lambda_o$, we have a strict inequality
\begin{equation*}
\psi_{\Lambda_s} (x) < \psi_{\Lambda_o} (x)
\end{equation*}
for all $x > 0$. This completes the proof.
\end{proof}

\begin{ex}
The Gosset lattice $E_8$ has the generator matrix $M_s$ given by \cite{Conway-Sloane}

\begin{equation*}
\left( \begin{array}{r r r r r r r r}
2 & -1 & 0& 0& 0& 0& 0& 1/2 \\
0& 1& -1& 0& 0& 0& 0& 1/2 \\
0& 0& 1& -1& 0& 0& 0& 1/2 \\
0& 0& 0& 1& -1& 0& 0& 1/2 \\
0& 0& 0& 0& 1& -1& 0& 1/2 \\
0& 0& 0& 0& 0& 1& -1& 1/2 \\
0& 0& 0& 0& 0& 0& 1& 1/2 \\
0& 0& 0& 0& 0& 0& 0& 1/2 \\
\end{array} \right),
\end{equation*}
so it is a skewing of the orthogonal lattice $\Lambda$ generated by $M_o = \diag(2, 1, ..., 1, 1/2)$. The theta series of the Gosset lattice is expressible by the Jacobi theta functions as \cite{Conway-Sloane}
\begin{equation*}
\Theta_{E_8}(z) = 1/2 (\vartheta_2(q)^8 + \vartheta_3(q)^8 + \vartheta_4(q)^8),
\end{equation*}
where $q = e^{i \pi z}$. The theta series of the orthogonal lattice $\Lambda$ is
\begin{equation}
\Theta_{\Lambda}(z) = \prod_{j =1}^n \vartheta_3(q_j),
\end{equation}
where $q_j = e^{i \pi a_j^2 z}$ and $a_j$ is the $j^{th}$ diagonal element of $M_o$. Now, recalling that $\psi_\Lambda (x) = \Theta_\Lambda (ix/\pi)$, we can compare the psi series of these two lattices by evaluating Jacobi theta functions. The plots of the psi functions are depicted in Fig. \ref{psi_skewings}. The figure shows that $\psi_{E_8}(x) < \psi_{\Lambda}(x)$ for all $x$, as predicted by Theorem \ref{skewing theorem}.

In coset coding, this has the following interpretation: $E_8$ and $\Lambda$ are both index $2^8$ subgroups of $\frac{1}{2}\Z^8$. If Bob's lattice is $\frac{1}{2}\Z^8$, then the coset lattices $E_8$ and $\Lambda$ will yield the same code rates, but $E_8$ with a better secrecy.
\end{ex}

\begin{figure}[h!]
  
  \centering
    \includegraphics[width=0.5\textwidth]{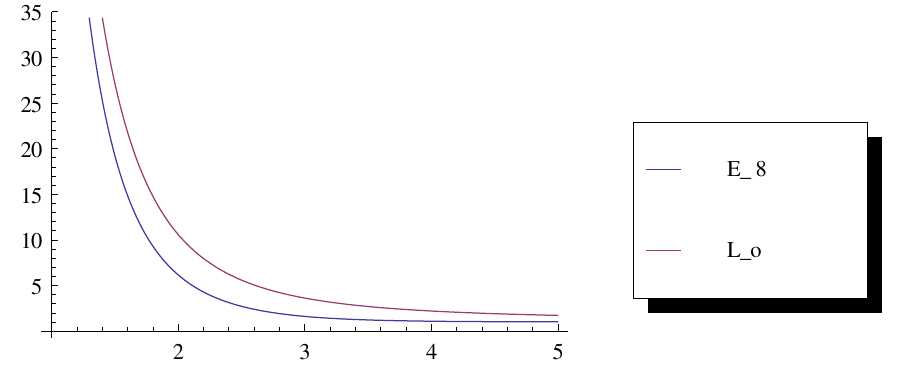}
    \caption{The psi functions of an orthogonal lattice $\Lambda=L_o$ and its skewing $E_8$.}
    \label{psi_skewings}
\end{figure}

\section{Conclusions}

In the construction of lattice codes for AWGN wiretap channels, skewed lattices should be taken more seriously. Namely, we have proved that orthogonal lattices are suboptimal not only in terms of the receiver's error probability as we already know from the sphere-packing theorems, but also in terms of the eavesdropper's correct decision probability when using lattice coset codes. Hence, the design of secure lattice codes should ideally be based on skewed lattices. However, due to implementation purposes, one may opt for only skewing the eavesdropper's lattice, while preserving the orthogonality of the legitimate receiver's lattice, which results in suboptimal performance but easy-to-implement algorithms for Bob, as well as improved security.   

\section{Acknowledgments}

This work was carried out during A. Karrila's MSc thesis project. The Department of Mathematics and Systems Analysis at Aalto University is gratefully acknowledged for the financial support. 

C. Hollanti is financially supported by the Academy of Finland grants \#276031, \#282938, and \#283262, and by a grant
from Magnus Ehrnrooth Foundation, Finland. 

The support from the European Science Foundation under the ESF COST Action IC1104 is
also gratefully acknowledged.

\end{document}